\newcommand{\todo}[1]{\fxfatal{\color{red}#1}}
\newcommand\mathindex[1]{\index[sym]{\ensuremath{{#1}}}}
\newcommand\mathindex*[1]{\index*[sym]{\ensuremath{{#1}}}}
\newcommand{\raisemath}[1]{\mathpalette{\raisem@th{#1}}}
\newcommand{\raisem@th}[3]{\raisebox{#1}{$#2#3$}}
\def\N{\mathbb{N}} 
\renewcommand{\cal}{\mathcal}
\newcommand\Problem[2][]{%
    \index[prob]{#2@\textsc{#2}}\textsc{#2}\xspace%
}
\newcommand\restr[2]{{
  \left.\kern-\nulldelimiterspace 
  #1 
  \vphantom{\big|} 
  \right|_{#2} 
  }}
\def\sminor^#1{
    \preccurlyeq_{\mathrlap{\mathsf{m}}}^{#1}
} 
\def\ssminor^#1{%
    \mathbin{\dot\preccurlyeq_{{\mathrlap{\mathsf{m}}}}^{#1}}\,
} 
\def\stminor^#1{
    \preccurlyeq_{\mathrlap{\mathsf{t}}}^{#1}
} 
\def\sstminor^#1{%
    \mathbin{\dot\preccurlyeq_{{\mathrlap{\mathsf{t}}}}^{#1}}\,
} 
\def\dir#1{\vec{#1}}
\def\grad_#1{\nabla\!_{#1}}
\def\sgrad_#1{{\dot\nabla}\!_{#1}}
\def\topgrad_#1{\widetilde \nabla\!_{#1}}
\def\stopgrad_#1{\dot{\widetilde\nabla}\!_{#1}}
\def\topomega_#1{\widetilde \omega_{#1}}
\def\colnum_#1{ \operatorname{col}_{#1} }
\def\wcolnum_#1{ \operatorname{wcol}_{#1} }
\def\adm_#1{ \operatorname{adm}_{#1} }
\renewcommand{\leq}{\leqslant}
\renewcommand{\ge}{\geqslant}
\renewcommand{\geq}{\geqslant}
\renewcommand{\epsilon}{\varepsilon}
\renewcommand{\emptyset}{\varnothing}
\newlength{\convarrowwidth}
\def\Iver#1{ \llbracket #1 \rrbracket }
\newcommand{\widthm}[1]{ \mathbf{#1} } 
\DeclareMathOperator{\tw}{ \widthm{tw} }
\DeclareMathOperator{\pw}{ \widthm{pw} }
\DeclareMathOperator{\td}{ \widthm{td} }
\DeclareMathOperator{\width}{ \widthm{width} } 
\def\YYYY{{Y_0 \uplus Y_1 \uplus \cdots \uplus Y_\ell}} 
\def\YYYY'{{Y'_0 \uplus Y'_1 \uplus \cdots \uplus Y'_{\ell'}}}
\def\vc{\mathop{\mathbf{vc}}}
\newcommand{\yaay}{\kern4pt \ding{51} \kern-8pt \ding{51}}%
\theoremstyle{plain}
\newtheorem{lemma}{Lemma}
\newtheorem{theorem}{Theorem}
\newtheorem{corollary}{Corollary}
\newtheorem{observation}{Observation}
\newtheoremstyle{case}
  {\topsep}   
  {\topsep}   
  {}  
  {\parindent}       
  {\bfseries} 
  {\normalfont.}         
  {5pt plus 1pt minus 1pt} 
  {#1 #2: {\normalfont #3}}          
\theoremstyle{case}
\numberwithin{subcase}{case}
\theoremstyle{definition}
\newtheorem{definition}{Definition}
\renewcommand*\etal{\xperiodafter{\emph{et~al}}}
\newcommand*\varrule[1][0.4pt]{\leavevmode\leaders\hrule height#1\hfill\kern0pt}
\setlist[1]{labelindent=\parindent,leftmargin=*} 
\setlist{itemsep=0pt}
\newenvironment{tightcenter}
 {\parskip=0pt\par\nopagebreak\centering}
 {\par\noindent\ignorespacesafterend}
\newlength{\RoundedBoxWidth}
\newsavebox{\GrayRoundedBox}
\newenvironment{GrayBox}[1]%
   {\setlength{\RoundedBoxWidth}{\textwidth-4.5ex}
    \def\boxheading{#1}
    \begin{lrbox}{\GrayRoundedBox}
       \begin{minipage}{\RoundedBoxWidth}%
   }{%
       \end{minipage}
    \end{lrbox}%
    \begin{tightcenter}%
    \begin{tikzpicture}%
       \node(Text)[draw=black!20,fill=white,rounded corners,%
             inner sep=2ex,text width=\RoundedBoxWidth]%
             {\usebox{\GrayRoundedBox}};
        \coordinate(x) at (current bounding box.north west);
        \node [draw=white,rectangle,inner sep=3pt,anchor=north west,fill=white] 
        at ($(x)+(6pt,.75em)$) {\boxheading};
    \end{tikzpicture}
    \end{tightcenter}\vspace{0pt}%
    \ignorespacesafterend
}    
\newenvironment{problem}[2][]{\noindent\ignorespaces%
                                \FrameSep=6pt%
                                \parindent=0pt%
                \vspace*{-.5em}
                \ifthenelse{\isempty{#1}}{%
                  \begin{GrayBox}{\textsc{#2}}%
                }{%
                  \begin{GrayBox}{\textsc{#2} parametrised by~{#1}}%
                }
                \index[prob]{#2@\textsc{#2}}%
                \newcommand\Prob{Problem:}%
                \newcommand\Input{Input:}%
                \begin{tabular*}{\textwidth}{@{\hspace{.1em}} >{\itshape} p{1.6cm} p{0.8\textwidth} @{}}%
            }{
                \end{tabular*}%
                \end{GrayBox}%
                \vspace*{-.5em}
                \ignorespacesafterend
            }  
\newenvironment{problem*}[2][]{\noindent\ignorespaces%
                                \FrameSep=6pt%
                                \parindent=0pt%
                \vspace*{-.5em}
                \ifthenelse{\isempty{#1}}{%
                  \begin{GrayBox}{\textsc{#2}}%
                }{%
                  \begin{GrayBox}{\textsc{#2} parametrised by~{#1}}%
                }
                \index[prob]{#2@\textsc{#2}|textbf}%
                \newcommand\Prob{Problem:}%
                \newcommand\Input{Input:}%
                \begin{tabular*}{\textwidth}{@{\hspace{.1em}} >{\itshape} p{1.6cm} p{0.8\textwidth} @{}}%
            }{
                \end{tabular*}%
                \end{GrayBox}%
                \vspace*{-.5em}
                \ignorespacesafterend
            }       
\newlength{\wleft}  \newlength{\wright}
\definecolor{Maroon}{cmyk}{0, 0.87, 0.68, 0.32}
\definecolor{RoyalBlue}{cmyk}{1, 0.50, 0, 0}
\definecolor{Black}{cmyk}{0, 0, 0, 0}
\definecolor{White}{rgb}{1, 1, 1}
\let\orgdescriptionlabel\descriptionlabel
\def\@savelabel{}
\renewcommand*{\descriptionlabel}[1]{%
  \let\orglabel\label
  \let\label\@gobble
  \phantomsection
  \def\@savelabel{#1}
  \edef\@currentlabel{{\def\hfil{}#1}}
  \edef\@currentlabelname{#1}%
  \let\label\orglabel
  \orgdescriptionlabel{#1}%
}
\def\namedlabel#1#2{\begingroup
   \def\@currentlabel{#1}%
   \label{#2}\endgroup
}
\renewcommand{\th}{%
    \ifmmode
        ^\mathrm{th}%
    \else%
        \textsuperscript{th}\xspace%
    \fi%
}
\newcommand{\st}{%
    \ifmmode
        ^\mathrm{st}%
    \else%
        \textsuperscript{st}\xspace%
    \fi%
}
\newcommand{\nd}{%
    \ifmmode
        ^\mathrm{nd}%
    \else%
        \textsuperscript{nd}\xspace%
    \fi%
}
\newcommand{\rd}{%
    \ifmmode
        ^\mathrm{rd}%
    \else%
        \textsuperscript{rd}\xspace%
    \fi%
}
\def\Nesetril{Ne\v{s}et\v{r}il\xspace}
\def\Mobius{M\"{o}bius\xspace}
\newcolumntype{m}{>{$}l<{$}}
\newcolumntype{M}{>{$\displaystyle}l<{$}} 
\newcolumntype{L}{l}  
\newcolumntype{C}{c}  
\newcolumntype{R}{r}  
\newcolumntype{X}{>{\global\let\currentrowstyle\relax}}
\newcolumntype{^}{>{\currentrowstyle}}
\title{Lower and Upper Bound for Computing the Size of All Second Neighbourhoods}
\author[1]{Gregory Gutin\thanks{Partially supported by the Royal Society Wolfson Research Merit Award.}}
\author[2]{George B. Mertzios\thanks{Partially supported by the EPSRC Grant~EP/P020372/1.}}
\author[1]{Felix Reidl}
\affil[1]{Department of Computer Science, Royal Holloway, University of London, UK. \newline
Email: \texttt{g.gutin@rhul.ac.uk}, \texttt{felix.reidl@rhul.ac.uk}}
\affil[2]{Department of Computer Science, Durham University, UK. \newline
Email: \texttt{george.mertzios@durham.ac.uk}}
\begin{document}

\def\eps{\epsilon}
\def\SAT{\Problem{Satisfiability}}

\maketitle
\begin{abstract}
  We consider the problem of computing the size of each $r$-neighbourhood 
  for every vertex of a graph. Specifically, we ask whether the size
  of the closed second neighbourhood can be computed in subquadratic time.
  
  Adapting the SETH reductions by Abboud et al. (2016) that exclude
  subquadratic algorithms to compute the radius of a graph, we find that a
  subquadratic algorithm would violate the SETH. On the other hand, a linear
  fpt-time algorithm by Demaine et al. (2014)
  parameterized by a certain `sparseness parameter' of the graph is known,
  where the dependence on the parameter is exponential. We show here that a
  better dependence is unlikely: for any~$\delta < 2$, no algorithm running
  in time $O(2^{o(\vc(G))} \, n^{\delta})$,
   where~$\vc(G)$ is the vertex cover
  number, is possible unless the SETH fails.

  We supplement these lower bounds with algorithms that solve the problem in
  time~$O(2^{\vc(G)/2} \vc(G)^2 \cdot n)$ and $O(2^w w \cdot n)$.
\end{abstract}

%
%
\section{Introduction}

For a vertex $v$ of a graph $G$ and an integer $r\ge 1$,
$N^r(v)$ ($N^r[v]$, respectively) denotes the set of vertices of $G$ of
distance exactly (at most, respectively) $r$ from $v$. For a graph $G$, $|G|$
will denote the number of vertices in $G$. As usual in graph algorithms
literature, unless defined differently, $n$ and $m$ will denote the number of
vertices and edges in the input graph.
In this paper, we consider the following two basic problems on graphs.

\begin{problem}{$r$-Neighbourhood Sizes}
	\Input & A graph~$G$ and an integer~$r$. \\
	\Prob & Compute for every vertex~$v \in G$ the size of~$N^r(v)$.
\end{problem}

\begin{problem}{Closed $r$-Neighbourhood Sizes}
	\Input & A graph~$G$ and an integer~$r$. \\
	\Prob & Compute for every vertex~$v \in G$ the size of~$N^r[v]$.
\end{problem}

\noindent
Since both problems are easily Turing-reducible to each other, we 
focus on the closed neighbourhood variant in the following with the understanding that
all results transfer to the open neighbourhood variant.
Without loss of generality, we will assume in the remainder of the paper that
the input graph is connected. Clearly we can solve the above problems in
time~$O(n(m+n))$ by conducting a (truncated) breadth-first search from every
vertex.  This means $\Omega(n^2)$ time even for sparse connected graphs. The
following question is natural: can we solve {\sc Closed $r$-Neighbourhood
Sizes} in a subquadratic (in $n$) time even for $r=2$? We will show in Theorem
\ref{lemma:no-subquadratic} that this is not possible provided the Strong
Exponential Time Hypothesis (SETH) holds. SETH has been put forward by
Impagliazzo and Paturi~\cite{SETH}, stating that, for every positive~$\eps
<1$, there exists an integer~$r$ such that \Problem{$r$-CNF SAT} cannot be
solved in time~$O(2^{\eps n})$, where $n$ is the number of variables in the
input $r$-CNF formula. More precisely, define~$s_r$ to be the infimum over all
numbers~$\delta$ for which there exists an algorithm that
solves~\Problem{$r$-CNF SAT} in time~$2^{\delta n} (n+m)^{O(1)}$. The
exponential time hypothesis~(ETH) states that~$s_3 > 0$, that is, there is no
subexponential algorithm solving~\Problem{3SAT}. SETH asserts that the limit
of the sequence~$(s_r)_{r \in \mathbb N}$ is 1.

Since subquadratic algorithms seem to be out of reach for
\Problem{Closed $2$-Neighbourhood Sizes}, we ask whether we can trade-off some
of the polynomial complexity in the input size for an exponential dependence
on some structural parameter of the input graph. Demaine~\etal showed that a
running time of~$O(2^{\dir \Delta_r(G)} n)$ is indeed possible (for the general
\Problem{Closed $r$-Neighbourhood Sizes} problem) where~$\dir \Delta_r$ is a certain
measure of the sparsity of~$G$~\cite{SparseComplexNetworks} which we describe briefly
below. Without going
into further detail here, we note that~$\dir \Delta_r$ satisfies~$\dir \Delta_2(G)
\leq \vc(G)$, where $\vc(G)$ is the minimum size of a vertex cover of $G$,
i.e.~a set which contains at least one vertex of every edge of $G$. Can we use
the following trade-off in the running time of~\cite{SparseComplexNetworks}:
replace $n$ by a subquadratic function in $n$ and replace $\dir \Delta_2(G)$
by $o(\vc(G))$~? We prove in Theorem  \ref{lemma:vc-lower-bound} that the
answer to this question is negative, assuming SETH. Therefore, since the
parameters treewidth and tree-depth\footnote{We define these two parameters
in the next section.} are smaller than the vertex cover number, the same
impossibility result follows also if we replace $vc(G)$ by any of these
parameters, see Corollary \ref{cor:variouspar}.

In contrast, we show in Theorem \ref{theorem:vc-upper-bound} that {\sc Closed
$2$-Neighbourhood Sizes} can even be solved in linear time in $n$ if the
factor is exponential in $\vc(G)/2$ (where the base of the exponent is 2). In
Theorem \ref{theorem:tw-upper-bound}, we prove that the same result is true if
we replace $\vc(G)/2$ by treewidth.

%
%
\section{Preliminaries}

In the following we will 
make explicit use of the
\emph{sparsification lemma} by Calabro, Impagliazzo, and
Paturi~\cite{Sparsification}:

\begin{lemma}[Sparsification Lemma~\cite{Sparsification}]
  For every~$r \in \N$ and~$\eps > 0$ there exists an algorithm which, given
  an~$r$-CNF formula~$\phi$ over~$n$ variables, outputs in time~$2^{\eps n} n^{O(1)}$
  a list of $r$-CNF formulas~$(\psi_i)_{i \leq t}$, where $t \leq 2^{\eps n}$, such that
  \begin{itemize}
    \item $\phi$ is satisfiable if and only if at least on~$\psi_i$ is satisfiable and
    \item each formula~$\psi_i$ has at most~$n$ variables, each of which occurring at most~$O((\frac{r}{\eps})^{3r})$ times.
  \end{itemize}
\end{lemma}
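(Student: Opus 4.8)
This statement is quoted verbatim from Calabro, Impagliazzo and Paturi, so in the paper it is used as a black box; nonetheless, here is how I would reconstruct its proof, following the sunflower-branching strategy that originates with Impagliazzo, Paturi and Zane. The plan is to build a branching tree whose leaves are exactly the output formulas $\psi_i$, where at every internal node we destroy a small sub-clause that is shared by too many clauses. Fix a threshold sequence $\theta_1, \dots, \theta_r$ (to be tuned). The invariant maintained at every node is that the current formula is an $r$-CNF on a subset of the original variables, and that it is logically equivalent to the conjunction of the original $\phi$ with a partial commitment made along the root-to-node path; this is what will give the required equivalence, namely that $\phi$ is satisfiable iff some leaf formula is.

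The branching rule is the following. As long as there is a set $H$ of at most $r$ literals contained in more than $\theta_{|H|}$ clauses, the sunflower lemma of \Erdos and Rado produces a large sunflower with core $H$, and I branch on $H$. In the \emph{satisfied} branch I assert that some literal of $H$ is true: I add the single clause $\bigvee_{\ell \in H} \ell$ and delete all clauses that contain $H$ (each such clause is implied by $\bigvee_{\ell \in H} \ell$, so equivalence is preserved), which removes more than $\theta_{|H|}$ clauses at once. In the \emph{falsified} branch I set every literal of $H$ to false and simplify: clauses meeting a literal of $H$ lose it, clauses meeting its negation vanish, and at least one variable is eliminated. The two branch conditions ``$H$ satisfied'' and ``$H$ falsified'' partition all assignments, and on each branch the formula is equivalent to $\phi$ conjoined with that condition, so taking the union over leaves recovers exactly the satisfying assignments of $\phi$. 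When no heavy $H$ remains, every single literal lies in at most $\theta_1$ clauses; tuning $\theta_1 = O((r/\eps)^{3r})$ then yields the per-variable occurrence bound.

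The quantitative core of the argument — and the step I expect to be the main obstacle — is bounding the number of leaves by $2^{\eps n}$ \emph{while} the thresholds are simultaneously small enough to force the occurrence bound. This is a potential/branching-factor calculation: one introduces a measure that charges the formula by its clauses weighted by width, shows that the satisfied branch discharges at least $\theta_{|H|}$ clauses and the falsified branch discharges at least one variable, and then bounds the size of the recursion tree by a product of per-level branching factors. Choosing the thresholds geometrically, of the shape $\theta_j = (r/\eps)^{\Theta(r-j)}$, is what balances the two competing demands, and the delicate part is verifying that the resulting product of branching factors is at most $2^{\eps n}$ rather than merely $2^{O(n)}$. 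Granting this, each node costs only $n^{O(1)}$ time (locate a heavy core by the sunflower lemma, then rewrite), there are at most $2^{\eps n}$ leaves, and hence the whole enumeration runs in $2^{\eps n} n^{O(1)}$, producing the list $(\psi_i)_{i \le t}$ with $t \le 2^{\eps n}$, each $\psi_i$ on at most the original $n$ variables, as claimed.
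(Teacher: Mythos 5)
The paper does not prove this lemma at all: it is imported verbatim from Calabro, Impagliazzo and Paturi and used purely as a black box, so there is no in-paper argument to compare yours against. Judged on its own terms, your reconstruction follows the right strategy --- the flower-branching of Impagliazzo--Paturi--Zane as refined in the cited source --- and the qualitative components are sound: the two branches (``some literal of $H$ is true'' versus ``all literals of $H$ are false'') partition the assignments and each preserves equisatisfiability exactly as you describe; termination holds because one branch strictly decreases the clause count while the other eliminates a variable; and once no singleton core is heavy, every literal occurs in at most $\theta_1$ clauses, which gives the occurrence bound at the leaves.

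The genuine gap is the one you concede yourself with ``granting this'': the bound $t \le 2^{\eps n}$ on the number of leaves. That is not a routine verification to be deferred --- it is the entire content of the lemma. A naive branching already yields $2^{O(n)}$ leaves (exhaustively setting all variables gives $2^n$ trivially sparse formulas), so everything hinges on showing that the satisfied branches are taken so rarely, relative to the $\theta_{|H|}$ clauses they discharge, that the recursion tree has only $2^{\eps n}$ leaves; this is precisely where the $(r/\eps)^{\Theta(r)}$ shape of the thresholds is forced and where the amortized accounting over clause widths is delicate. Without carrying out that calculation the argument establishes only the trivial version of the statement. One smaller point: you do not need the \Erdos--Rado sunflower lemma anywhere. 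The ``flowers'' in this argument are simply sets of clauses sharing a common core $H$, with no disjointness requirement on the petals, and a heavy core is found by direct enumeration over the $O(n^r)$ candidate literal sets, which is polynomial for fixed $r$.
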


\noindent
We now formally define the notions of a \emph{tree decomposition} and of a \emph{nice tree decomposition}, which are key to our analysis below.

\begin{definition}  
	Given a graph $G = (V,E)$, a \emph{tree decomposition} of
	$G$ is a pair $({\cal T}, \beta)$, where ${\cal T}$ is a tree and
	$\beta:V({\cal T}) \rightarrow 2^V$ such that  $\bigcup_{x \in V({\cal
	T})}\beta(x) = V$, for each edge $uv \in E$, there exists a node $x \in
	V({\cal T})$ such that $u,v \in \beta(x)$, and for each $v \in V$, the set
	$\beta^{-1}(v)$ of nodes  form a connected subgraph (i.e.~a subtree) in ${\cal T}$.

	The \emph{width} of $({\cal T}, \beta)$ is $\max_{x \in V({\cal T})}(|\beta(x)|-1)$.
	The \emph{treewidth} of $G$ (denoted by $\tw(G)$) is the minimum width of all tree decompositions of $G$.
\end{definition}

\noindent
A \emph{path decomposition} of $G$ is defined similar to a \emph{tree
decomposition} of $G$, but the only trees $\cal T$ allowed are paths. This
leads to the \emph{pathwidth} of $G$ denoted $\pw(G)$.

 

\begin{definition}
\label{nice-tree-decom-def}
Given an undirected graph $G = (V,E)$, a \emph{nice tree decomposition} $({\cal T}, \beta)$ is a tree decomposition such that ${\cal T}$ is a rooted tree, and each of the nodes $x \in V({\cal T})$ falls under one of the following classes:
 \begin{itemize}
  \item {\bf $x$ is a Leaf node:} then $x$ has no children in ${\cal T}$;
  \item {\bf $x$ is an Introduce node:} then $x$ has a single child $y$ in ${\cal T}$, and there exists a vertex $v \notin \beta(y)$ such that $\beta(x) = \beta(y) \cup \{v\}$;
  \item {\bf $x$ is a Forget node:} then $x$ has a single child $y$ in ${\cal T}$, and there exists a vertex $v \in \beta(y)$ such that $\beta(x) = \beta(y) \setminus \{v\}$;
  \item {\bf $x$ is a Join node:} then $x$ has exactly two children $y$ and $z$, and $\beta(x) = \beta(y) = \beta(z)$.
 \end{itemize}
\end{definition}
 
\noindent
It is well-known \cite{kloks1994} that any given tree decomposition of a graph
can be transformed into a nice tree decomposition of the same width in
polynomial time.

For a rooted tree~${\cal T}$ and a node~$i \in {\cal T}$ we will write~${\cal
T}_i$ to denote the subtree of~$\cal T$ which includes~$i$ and all its
descendants. We consider~${\cal T}_i$ to be rooted in~$i$.

Besides width-measures like treedepth, pathwith, and treewidth we will further
consider the \emph{sparseness} parameters $\grad_1$ and~$\topgrad_1$ (see
Definition~\ref{grad-topgrad-def}). Recall that a graph~$H$ is a \emph{minor}
of a graph~$G$ if $H$ can be obtained from~$G$ by contracting a collection of
disjoint connected subgraphs and then taking a (not necessarily induced)
subgraph. If we impose the restriction that each contracted subgraph further
has radius at most~$r$ (that is, there exists a vertex in it from which every
other vertex has distance at most~$r$ within the subgraph), then we say
that~$H$ is an \emph{$r$-shallow minor} of~$G$ and we write~$H \sminor^r G$.

Recall that~$H$ is a \emph{topological minor} of a graph~$G$ if we can select
$|V(H)|$ vertices in~$G$ (the \emph{nails}) and connect them by~$|E(H)|$
internally vertex-disjoint paths~$\mathcal P$ such that~$uv \in H$ if and only
if the corresponding nails~$u', v'$ in~$G$ are connected by a path
in~$\mathcal P$. If we further impose the restriction that all paths
in~$\mathcal P$ have length at most~$2r+1$, then $H$ we say that~$H$ is an
\emph{$r$-shallow topological minor} of~$G$ and we write~$H \stminor^r G$.
Note that every $r$-shallow topological minor is in particular an $r$-shallow
minor.

With these two containment notions, we can now define the sparseness
parameters~$\grad_r$ and~$\topgrad_r$. For a more in-depth introduction
to the topic of shallow minors we refer to the book by \Nesetril and
Ossona de Mendez~\cite{Sparsity}.

\begin{definition}[Grad and top-grad]
\label{grad-topgrad-def}
  For a graph~$G$ and an integer~$r \geq 0$, we define the \emph{greatest
    reduced average density (grad) at depth~$r$} as
  \[
    \grad_r(G) = \max_{H \sminor^r G}  \frac{|E(H)|}{|V(H)|}
  \]
  and the the \emph{topologically greatest
    reduced average density (top-grad) at depth~$r$} as
  \[
    \topgrad_r(G) = \max_{H \stminor^r G}  \frac{|E(H)|}{|V(H)|}.
  \]
\end{definition}

\noindent
The following is a simple observation relevant to our results below:

\begin{observation}\label{obs:nabla-vc}
	$\topgrad_1(G) \leq \grad_1(G) \leq \vc(G)$.
\end{observation}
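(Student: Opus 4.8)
The plan is to treat the two inequalities separately, with the second being the only substantive one. For~$\topgrad_1(G) \le \grad_1(G)$ I would simply invoke the containment of minor notions stated just before Definition~\ref{grad-topgrad-def}: since every $1$-shallow topological minor of~$G$ is in particular a $1$-shallow minor, the family~$\{H : H \stminor^1 G\}$ is a subfamily of~$\{H : H \sminor^1 G\}$. The maximum defining~$\topgrad_1(G)$ is therefore taken over a subset of the graphs entering the maximum for~$\grad_1(G)$, which gives the inequality at once.

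For the second inequality I would prove the stronger claim that \emph{every} minor~$H$ of~$G$ (of arbitrary depth, so in particular every $1$-shallow minor) satisfies~$|E(H)| \le \vc(G)\,(|V(H)|-1)$; dividing by~$|V(H)|$ and taking the maximum then yields~$\grad_1(G) \le \vc(G)$. First I would fix a minimum vertex cover~$C$ of~$G$, so that~$|C| = \vc(G)$ and every edge of~$G$ has an endpoint in~$C$, and realise~$H$ by a family of pairwise disjoint connected branch sets~$\{B_v\}_{v \in V(H)}$ in~$G$, where each~$uv \in E(H)$ is witnessed by an edge of~$G$ with one endpoint in~$B_u$ and the other in~$B_v$.

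The core is a charging argument. To each edge~$uv \in E(H)$ I assign a cover vertex~$\gamma(uv) \in C$: fixing a witnessing edge~$xy$ of~$G$ with~$x \in B_u$ and~$y \in B_v$, at least one of~$x,y$ lies in~$C$ because~$C$ is a vertex cover, and I take~$\gamma(uv)$ to be such an endpoint. By construction~$\gamma(uv)$ lies in the branch set of one of the two endpoints of~$uv$. The key observation is that, since the branch sets are pairwise disjoint, each~$c \in C$ belongs to at most one branch set~$B_w$; hence every edge charged to~$c$ must be incident to~$w$ in~$H$, so the number of edges mapped to a fixed~$c$ is at most~$\deg_H(w) \le |V(H)|-1$. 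Summing over the at most~$\vc(G)$ cover vertices that occur in some branch set gives
\[
  |E(H)| \;=\; \sum_{c \in C} |\gamma^{-1}(c)| \;\le\; \vc(G)\,(|V(H)|-1),
\]
which is what I want.

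I expect the only delicate point to be the bookkeeping of~$\gamma$: one must check it is well defined (each witnessing edge really is covered by~$C$) and, crucially, that disjointness of the branch sets is exactly what forces all edges in a single fibre~$\gamma^{-1}(c)$ to share the common endpoint~$w$, so that the fibre is bounded by a degree in~$H$ rather than by something growing with~$|C|$. It is worth noting that the radius restriction on the branch sets is never used, which is why the same argument in fact bounds~$\grad_r(G)$ by~$\vc(G)$ for every~$r$; here we only need the case~$r=1$.
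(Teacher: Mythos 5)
Your proposal is correct, and it takes a genuinely different route from the paper. The paper's proof of $\grad_1(G) \leq \vc(G)$ is extremal: it fixes a densest $1$-shallow minor $H$ of minimum order, uses the fact that a $1$-shallow contraction contracts a star forest, splits the stars into \emph{big} ones (which must meet the cover $X$, so there are at most $|X|$ of them) and \emph{small} ones (singletons outside $X$, hence pairwise non-adjacent in $H$ and of degree at most $|X|$), and then derives a contradiction from the minimum-degree bound $\delta(H) \geq \grad_0(H) > |X|$ forced by the minimality of $H$. Your argument instead charges each edge $uv \in E(H)$ to a cover vertex lying on a witnessing edge between $B_u$ and $B_v$; since branch sets are disjoint, each fibre $\gamma^{-1}(c)$ consists of edges sharing a common endpoint $w$ of $H$ and so has size at most $\deg_H(w) \leq |V(H)|-1$, giving $|E(H)| \leq \vc(G)\,(|V(H)|-1)$ directly, with no extremal choice and no contradiction. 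The bookkeeping you flag as delicate does go through: a cover vertex not lying in any branch set has empty fibre, and simplicity of $H$ bounds the fibre by a degree. Your approach buys generality --- the radius restriction on branch sets is never used, so the same three lines give $\grad_r(G) \leq \vc(G)$ for every $r$, strictly more than the observation claims --- and it is arguably shorter; the paper's approach is tailored to depth $1$ and makes the star-forest structure of $1$-shallow minors explicit, which is the structural picture behind the tightness example $K_{s,t}$ discussed right after the observation. Both arguments are compatible with that tightness discussion.
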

\begin{proof}
	The first inequality follows immediately since every $1$-shallow
	topological minor is also a $1$-shallow minor. To prove the second
	inequality, let~$X \subseteq V(G)$ be a minimal vertex cover of~$G$ and
	let~$H$ be a $1$-shallow minor of~$G$ (i.e.~$H \sminor^1 G$) with~$\grad_0(H) =
	|E(H)|/|V(H)| = \grad_1(G)$. Let us choose~$H$ among all minors that
	satisfy this relation such that~$|V(G)|$ is minimal.

	Contracting an $1$-shallow minor is equivalent to contracting a star 
	forest. Let~$\{S_x\}_{x \in H}$ be the stars contracted to obtain~$H$,
	identified by the resulting vertex in~$H$. Note that every star~$S_x$ with
	more than one vertex necessarily intersects with~$X$, thus the number
*	of such stars is~$|\{S_x \mid S_x \cap X \neq \emptyset \}_{x \in H}| \leq |X|$.
	Let us call these stars \emph{big} and all other stars \emph{small}.
	Note that a small star is simply a single vertex in~$V(G)\setminus X$. It follows
	that for two small stars~$S_x, S_y$ we have that~$xy \not \in H$.

	If~$\grad_0(H) \leq |X|$ we are done, thus assume that~$\grad_0(H) > |X|$. 
	By minimality, it follows that the minimum degree $\delta(H)$ of $H$ satisfies~$\delta(H) \geq \grad_0(H) > |X|$, otherwise
	we could remove a vertex of minimal degree without decreasing~$\grad_0(H)$,
	contradicting our choice of~$H$. But then there cannot be any small stars
	since their corresponding vertices have degree at most~$|X|$. We arrive
	at a contradiction since then only~$|X|$ vertices remain in~$H$, making
	a density of~$|X|$ impossible.
\end{proof}

\noindent
Note that the bound of Observation~\ref{obs:nabla-vc} is asymptotically tight.
Indeed, consider the graph~$K_{s,t}$; clearly, if we keep~$t$ fixed and
let~$s$ grow, the density approaches~$t = \vc(K_{s,t})$ from below.
Furthermore note that taking a $1$-shallow minor does not affect this argument
(for example, adding all edges to the side of size~$t$ does not improve
asymptotic bound).

Let us now discuss the parameter~$\dir \Delta_r(G)$. It is defined as the
maximum in-degree of so-called \emph{transitive fraternal augmentations}
of~$G$. The first augmentation~$\dir G_1$ is simply an acyclic augmentation
that minimizes the maximum in-degree; $\dir G_i$ is then computed from
$\dir G_{i-1}$ by the following two rules:
\begin{enumerate}
  \item If~$uv, vw \in \dir G_{i-1}$ then~$uw \in \dir G_i$; and
  \item if~$uv, wv \in \dir G_{i-1}$ then either~$uw \in \dir G_i$
        \emph{or} $wu \in \dir G_i$.
\end{enumerate}
The orientation in the second case is chosen such that~$\dir G_i$
has the smallest possible maximum in-degree. The following lemma illucidats the
realtionship between dtf-augmentations and vertex covers. We have to phrase it slighlty
weaker than the bounds on~$\grad_1$ and~$\topgrad_1$ since the value of~$\dir \Delta_r$
depends on how the augmentation was computed.

\begin{observation}\label{obs:dtf-vc}
  There exists a dtf-augmentation of~$G$ with
  $\dir \Delta_r(G) \leq \vc(G)$ for all~$r \geq 1$.
\end{observation}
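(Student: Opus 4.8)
The plan is to exhibit a single well-chosen dtf-augmentation and to track one structural invariant through the entire augmentation process. First I would fix a minimum vertex cover $X \subseteq V(G)$, so that $|X| = \vc(G)$ and, since $X$ is a cover, the complement $I = V(G) \setminus X$ is an independent set. I would then choose the initial orientation $\dir G_1$ as follows: orient every edge between $X$ and $I$ from its endpoint in $X$ towards its endpoint in $I$, and orient the edges inside $X$ acyclically according to an arbitrary linear order on $X$. There are no edges inside $I$, so this orients all of $G$; it is acyclic (the vertices of $I$ are sinks, and the arcs inside $X$ follow a linear order), and every vertex of $I$ receives arcs only from $X$ while every vertex of $X$ receives arcs only from within $X$. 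Hence the maximum in-degree of $\dir G_1$ is at most $|X| = \vc(G)$. The feature to record is the invariant $(\star)$: \emph{every arc has its tail in $X$}.

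The heart of the argument is to show that $(\star)$ is preserved by both augmentation rules, so that it holds for every $\dir G_i$ by induction on $i$. Assume $(\star)$ holds for $\dir G_{i-1}$. For the transitive rule, a new arc $uw$ arises from arcs $uv, vw \in \dir G_{i-1}$; its tail $u$ is already the tail of the arc $uv$, hence $u \in X$ by the inductive hypothesis. For the fraternal rule, a new arc is created between two vertices $u,w$ that are both tails of arcs $uv, wv \in \dir G_{i-1}$ pointing into a common vertex $v$; by $(\star)$ both $u$ and $w$ lie in $X$, so \emph{whichever} of the two orientations $uw$ or $wu$ is selected to minimise the maximum in-degree, the resulting arc again has its tail in $X$. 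Thus $(\star)$ propagates to $\dir G_i$ irrespective of how the nondeterministic fraternal choices are resolved.

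With $(\star)$ established at every depth, the conclusion is immediate: for any vertex $v$ and any $i \ge 1$, the in-neighbours of $v$ in $\dir G_i$ are exactly the tails of arcs into $v$, hence are contained in $X$; as they are distinct vertices of $X$, the in-degree of $v$ in $\dir G_i$ is at most $|X| = \vc(G)$. This bounds the maximum in-degree uniformly, giving $\dir \Delta_r(G) \le \vc(G)$ for all $r \ge 1$.

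The only delicate point — and the one I would treat carefully — is the interplay with the requirement that $\dir G_1$ \emph{minimise} the maximum in-degree. The orientation constructed above need not be a minimiser, since its maximum in-degree may exceed the degeneracy of $G$. This is harmless for the present statement, however, which asserts only the \emph{existence} of a dtf-augmentation meeting the bound, exactly as the surrounding discussion stresses that $\dir\Delta_r$ is sensitive to how the augmentation is built. I would therefore simply take the orientation above as the first augmentation; its maximum in-degree is already at most $\vc(G)$, so it is consistent with the bound, and the invariant argument supplies the rest. I expect the invariant $(\star)$, together with the observation that the fraternal tie-break can never place an arc's tail outside $X$, to be the crux of the proof; everything else is routine bookkeeping.
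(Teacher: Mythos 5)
Your proof is correct and follows essentially the same route as the paper: both orient all edges from the cover $X$ into the independent set and observe that no augmentation step can ever create an arc whose tail lies outside $X$, bounding every in-degree by $|X|$. Your version merely makes the invariant and its preservation under the two rules more explicit, and your remark about $\dir G_1$ not needing to minimise the maximum in-degree matches the paper's own caveat that the observation only asserts the existence of a suitable dtf-augmentation.
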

\begin{proof}
  Let~$X$ be a minimal vertex cover of~$G$ and let~$Y := V(G)\setminus X$. We
  construct~$\dir G_1$ by orienting all edges incident to~$Y$ towards~$Y$ and
  choose an arbitrary acyclic orientation for all other edges.

  Any augmentation built from~$\dir G_1$ will not add any out-arcs to~$Y$;
  thus the maximum in-degree of vertices in~$Y$ is~$|X|$. The maximum in-degree
  of a vertex in~$X$ is~$|X|-1$ since no arc will point from~$Y$ to~$X$. This proves
  the claim.
\end{proof}

%
%
\section{Lower Bounds}

We adapt the construction of Abboud, Williams, and Wang \cite{Abboud2016} to
verify our intuition that computing
neighbourhood sizes is probably not possible in subquadratic time. 
\begin{theorem}\label{lemma:no-subquadratic}
	For any $\eps >0$, \Problem{Closed $2$-Neighbourhood Sizes} on a graph $G$ cannot be solved in time~$O(|G|^{2-\eps})$, unless SETH fails.
\end{theorem}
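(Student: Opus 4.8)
The plan is to adapt the Orthogonal-Vectors-style encoding underlying the radius lower bound of Abboud, Williams and Wang~\cite{Abboud2016} so that satisfiability of a CNF formula is revealed by the \emph{size} of a closed second neighbourhood rather than by a distance. I start from an $r$-CNF formula $\phi$ on $n$ variables and first apply the Sparsification Lemma~\cite{Sparsification} with a small parameter $\eps_1$: this replaces $\phi$ by at most $2^{\eps_1 n}$ formulas $\psi_1,\dots,\psi_t$, each still on at most $n$ variables but with only $O(n)$ clauses, such that $\phi$ is satisfiable iff some $\psi_i$ is. It therefore suffices to reduce a single sparse formula $\psi$, with $n$ variables and $m = O(n)$ clauses, to an instance of \Problem{Closed $2$-Neighbourhood Sizes}.

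For such a $\psi$, split the variables into two halves and let $A = \{a_1,\dots,a_N\}$ and $B = \{b_1,\dots,b_N\}$, with $N = 2^{n/2}$, encode all partial assignments to the first and second half respectively. Introduce one \emph{clause vertex} $c_j$ per clause $C_j$ and add the edge $a_i c_j$ exactly when the partial assignment $\alpha_i$ fails to satisfy $C_j$, and $b_k c_j$ exactly when $\beta_k$ fails to satisfy $C_j$. The key point of this encoding is that $a_i$ and $b_k$ have a common clause-neighbour iff the combined assignment $(\alpha_i,\beta_k)$ leaves some clause unsatisfied; hence $\psi$ is satisfiable iff there is a pair $(a_i,b_k)$ sharing \emph{no} clause vertex. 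To control all distances I add an apex $u$ joined to every vertex of $A$ and every clause vertex, a second apex $u'$ joined to every vertex of $B$, and the edge $u u'$ (the apexes also guarantee connectivity). One checks that, from any $a_i$, the vertices within distance $2$ are exactly: $a_i$ and $u$; all other vertices of $A$ and all clause vertices, reached through $u$; the apex $u'$, reached through $u$; and precisely those $b_k$ that share a clause vertex with $a_i$. The crucial observation is that the apex path from $a_i$ to $b_k$ (via $u$ and $u'$) has length $3$, so no $b_k$ is pulled to distance $2$ except through a genuinely shared clause.

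Consequently
\[
  |N^2[a_i]| = N + m + 2 + s_i,
\]
where $s_i$ is the number of $b_k$ sharing a clause vertex with $a_i$, and the additive term $N+m+2$ is the same for every $i$ and is known from the construction. Since $\psi$ is satisfiable iff some $a_i$ has $s_i < N$, the formula is satisfiable iff $\min_i |N^2[a_i]| < 2N + m + 2$. Thus, given the sizes of all closed second neighbourhoods of the constructed graph $G$, which has $|G| = \Theta(2^{n/2})$ vertices, a single linear scan decides satisfiability of $\psi$. A hypothetical $O(|G|^{2-\eps})$ algorithm would run in time $O(2^{(1-\eps/2)n})$ per formula $\psi_i$; summed over the $2^{\eps_1 n}$ formulas produced by sparsification and choosing $\eps_1 = \eps/4$, this solves $r$-CNF SAT in time $2^{(1-\eps/4)n}\,n^{O(1)}$ for \emph{every} fixed $r$, contradicting SETH~\cite{SETH}.

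The routine parts are the sparsification bookkeeping and the final exponent arithmetic. The part demanding care is the distance analysis of the constructed graph: I must argue that the only $i$-dependent contribution to $|N^2[a_i]|$ is the sharing term $s_i$, i.e.\ that the apex vertices make the $A$-internal and clause-vertex contributions uniform while creating no spurious length-$2$ path between $A$ and $B$. Verifying that every $b_k$ not sharing a clause with $a_i$ sits at distance at least $3$, and that no clause vertex escapes distance $2$, is the crux on which correctness of the threshold rests.
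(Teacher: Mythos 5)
Your proposal is correct and follows essentially the same construction as the paper: two sides $A$ and $B$ encoding the half-assignments, clause vertices adjacent to the partial assignments that fail to satisfy them, and two apex vertices that force every vertex of $A\cup C$ and the apexes into $N^2[a_i]$ while leaving $b_k\in N^2[a_i]$ exactly when a clause vertex is shared; the threshold test $\min_i|N^2[a_i]|<2N+m+2$ is the paper's test verbatim. The only (harmless) deviations are that you invoke the Sparsification Lemma here, which the paper reserves for the vertex-cover lower bound and which is not needed since $m=n^{O(1)}$ for $r$-CNF anyway, and that your second apex is not joined to the clause vertices, which does not affect the distance analysis.
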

\begin{proof}
	Consider a \SAT instance~$\phi$ with variables~$x_1,\ldots,x_n$ and a set $C$ of $m$ clauses. For
	simplicity, assume that~$n$ is even and define~$N := 2^{n/2}$. We partition the
	variables into sets~$X_l := \{x_1,\ldots,x_{n/2}\}$ and~$X_h :=
	\{x_{n/2+1},\ldots,x_n\}$.
	
	Now construct a graph~$G$ as follows: create one vertex for each
	of the~$N=2^{n/2}$ possible truth assignments~$\alpha_i\colon X_l \to
	\{0,1\}^{n/2}$ of~$X_l$; call the set of these vertices~$A :=
	\{\alpha_i\}_{i \in [N]}$. Proceed similarly for~$X_h$ and create a set~$B$ of~$N$ vertices, corresponding to all truth assignments~$\beta_i\colon X_h \to\{0,1\}^{n/2}$.
	Furthermore create one vertex~$c_i$ for every clause in~$\phi$ and call by~$C$ the resulting set of $m$ vertices. Finally, create two additional vertices~$v_a,v_b$. 
	
  Now for every partial assignment~$\gamma \in A \cup B$, connect~$\gamma$ to 
  each clause~$c \in C$ which is \emph{not} satisfied by~$\gamma$ (we consider
  a clause to be satisfied under a partial assignment if at least one positive
  variable of the clause is set to true or at least one negative variable is
  set to false). Finally, connect vertex~$v_a$ to vertex~$v_b$ and to all vertices in~$A \cup C$, and connect vertex~$v_b$ to 
  all vertices in~$B \cup C$. This concludes the construction of~$G$, which can be executed in~$O(N m)$ time. For an illustration of this construction see Figure~\ref{figure-quadratic-lower-bound}.
	
	Note that, if there exist partial truth assignments~$\alpha \in A$ and~$\beta \in B$ such
	that~${N[\alpha] \cap N[\beta] = \emptyset}$, then the
	truth assignment~$(\alpha,\beta)$ satisfies~$\phi$. By construction,
	$A \cup C \cup \{v_a,v_b\} \subseteq N^2   [\alpha]$ for every~$\alpha \in A$.
	Furthermore, for every~$\beta \in B \cap N^2[\alpha]$ we know that the truth assignment~$(\alpha,\beta)$
	does \emph{not} satisfy~$\phi$. We therefore can reformulate the condition
	under which a satisfying truth assignment~$(\alpha,\beta)$ does exist: if for any
	$\alpha \in A$ we have that~$|N^2[\alpha]| < |A| + |B| + |C| + 2 = 2N + m + 2$,
	then there must be a some~$\beta \in B \setminus N^2[\alpha]$, and thus
	$(\alpha,\beta)$ is satisfying. Note that the reverse holds as well: if there
	is a satisfying assignment for~$\phi$, the respective restrictions to~$X_l$
	and~$X_h$ are vertices in~$G$ with the aforementioned property.

	Assume that we can solve~\Problem{Closed $2$-Neighbourhood Sizes} for~$G$ in
	time~$O(|G|^{2-\eps})$ for some~$\eps > 0$. Since the output
	consists of~$|G|$ numbers, we can test in time~$O(|G| \log |G|)$ whether
	some vertex in~$A$ has strictly less than~$2N+m+2$ $\leq 2$-neighbours. But then
	we could find a satisfying assignment for~$\phi$ in time
	\[
		O\big(Nm + |G|^{2-\eps} + |G|\log |G|\big) = O\big(Nm + (2N+m+2)^{2-\eps} \big)
		= 2^{n \cdot (1-\eps/2)} m^{O(1)},
	\]
	contradicting SETH.
\end{proof}

\begin{figure}
	\centering
	\includegraphics[scale=.5]{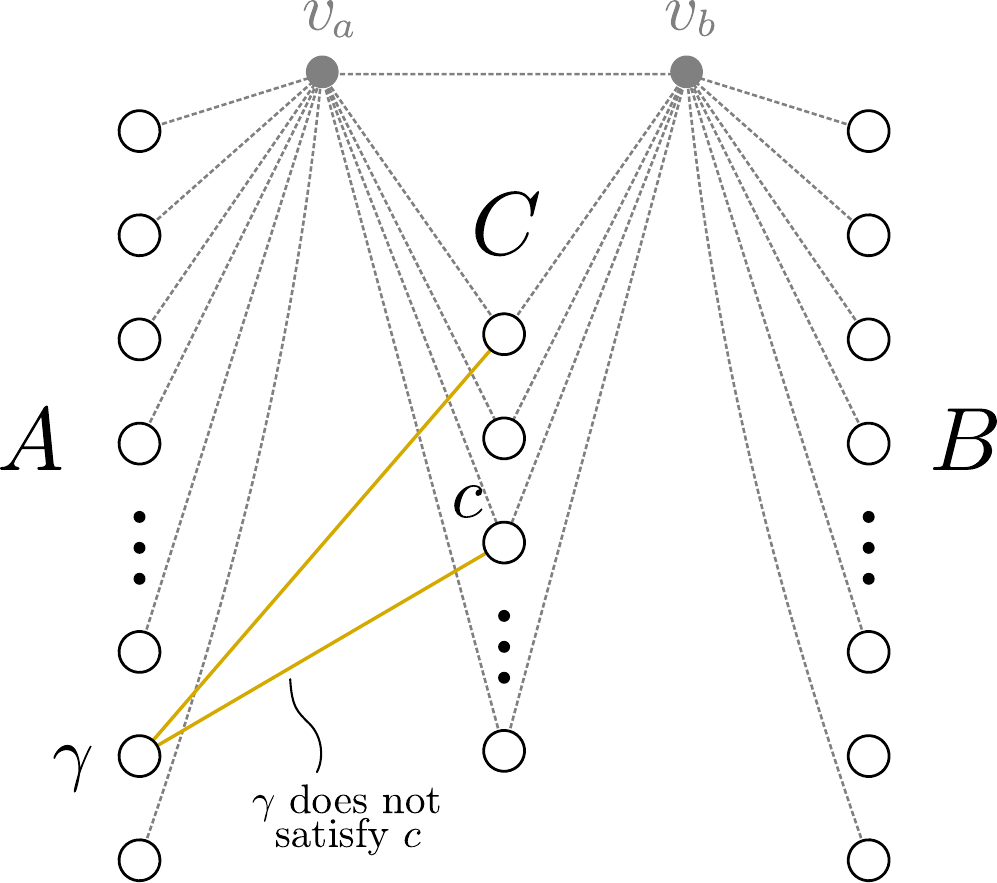}\hspace{3em}%
	\caption{%
		The reduction from \Problem{SAT} to \Problem{Closed 2-Neighbourhood Sizes}.
	}
	\label{figure-quadratic-lower-bound}
\end{figure}

%
%

\begin{theorem}\label{lemma:vc-lower-bound}
	For any~$\delta < 2$, \Problem{Closed $2$-Neighbourhood Sizes}
	cannot be solved in time~$O(2^{o(\vc(G))} \, n^{\delta})$, unless SETH fails.
\end{theorem}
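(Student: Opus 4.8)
The plan is to reduce from \Problem{$r$-CNF SAT} and reuse verbatim the graph construction of Theorem~\ref{lemma:no-subquadratic}, but to first apply the Sparsification Lemma so that the number of clauses---and hence the vertex cover of the constructed graph---is forced to be linear in the number of variables. Concretely, given an $r$-CNF formula~$\phi$ on~$n$ variables (pad so that~$n$ is even), fix a small~$\eps' > 0$ to be chosen later and invoke the Sparsification Lemma to obtain, in time~$2^{\eps' n} n^{O(1)}$, a list of at most~$t \leq 2^{\eps' n}$ formulas~$(\psi_i)_{i \leq t}$, each an $r$-CNF on at most~$n$ variables in which every variable occurs at most~$O((r/\eps')^{3r})$ times. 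In particular each~$\psi_i$ has~$m_i = O(n)$ clauses, where the hidden constant depends only on~$r$ and~$\eps'$, and~$\phi$ is satisfiable if and only if some~$\psi_i$ is.

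First I would apply the construction of Theorem~\ref{lemma:no-subquadratic} to each sparse formula~$\psi_i$, obtaining a graph~$G_i$ on vertex set~$A \cup B \cup C \cup \{v_a,v_b\}$ with~$|A| = |B| = 2^{n/2}$ and~$|C| = m_i$. The crucial observation is that~$C \cup \{v_a,v_b\}$ is a vertex cover of~$G_i$: every clause--assignment edge is covered by its clause endpoint in~$C$, and every remaining edge (the edges~$v_av_b$, those from~$v_a$ to~$A\cup C$, and those from~$v_b$ to~$B\cup C$) is incident to~$v_a$ or~$v_b$. Hence~$\vc(G_i) \leq m_i + 2 = O(n)$, while the number of vertices is~$|V(G_i)| = 2\cdot 2^{n/2} + m_i + 2 = \Theta(2^{n/2})$.

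Next, suppose for contradiction that \Problem{Closed $2$-Neighbourhood Sizes} can be solved in time~$O(2^{o(\vc(G))}\, n^{\delta})$ for some~$\delta < 2$. Running this algorithm on each~$G_i$ and testing, exactly as in the proof of Theorem~\ref{lemma:no-subquadratic}, whether some vertex of~$A$ has fewer than~$2\cdot 2^{n/2} + m_i + 2$ closed $2$-neighbours decides satisfiability of~$\psi_i$. Because~$\vc(G_i) = O(n)$ we have~$2^{o(\vc(G_i))} = 2^{o(n)}$, and because~$|V(G_i)| = \Theta(2^{n/2})$ we have~$|V(G_i)|^{\delta} = \Theta(2^{\delta n/2})$. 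Summing over all~$t \leq 2^{\eps' n}$ formulas, the total time to decide~$\phi$ is
\[
  2^{\eps' n}\cdot 2^{o(n)} \cdot 2^{\delta n/2}\, n^{O(1)}
  = 2^{(\eps' + \delta/2 + o(1))\, n}.
\]

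Since~$\delta < 2$ gives~$\delta/2 < 1$, I would fix~$\eps' := (1-\delta/2)/2$, which depends only on~$\delta$ and not on~$r$, so that~$\eps' + \delta/2 = \tfrac12 + \tfrac{\delta}{4} < 1$. This yields, for every fixed~$r$, an algorithm for \Problem{$r$-CNF SAT} running in time~$2^{cn}$ with~$c = \tfrac12 + \tfrac{\delta}{4} < 1$, so~$s_r \leq c$ for all~$r$ and thus~$\lim_r s_r < 1$, contradicting SETH. The step I expect to require the most care is the bookkeeping that makes the vertex cover genuinely linear: it is precisely the occurrence bound supplied by the Sparsification Lemma that forces~$m_i = O(n)$ and thereby collapses the a~priori uncontrolled~$2^{o(\vc)}$ factor down to a harmless~$2^{o(n)}$. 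Without sparsification the vertex cover could be polynomially large in~$n$ and the exponential-in-$\vc$ factor would swamp any savings gained from the~$n^{\delta}$ term.
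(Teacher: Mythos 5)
Your proposal is correct and follows essentially the same route as the paper: sparsify via the Sparsification Lemma to force $m_i = O(n)$ clauses, reuse the construction of Theorem~\ref{lemma:no-subquadratic} verbatim, observe that $C \cup \{v_a,v_b\}$ is a vertex cover of size $O(n)$, and sum the running times over the $2^{\eps' n}$ sparse formulas to beat $2^n$ and contradict SETH. Your choice of $\eps'$ depending only on $\delta$ (with the $2^{o(\vc(G_i))}$ factor absorbed into $2^{o(n)}$) is in fact a slightly cleaner way to organise the final calculation than the paper's, but the argument is the same.
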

\begin{proof}
  Let~$\phi$ be an~$r$-CNF formula and let~$\eps > 0$ be some constant we will
  fix later. Using the sparsification lemma, we construct~$t \leq 2^{\eps n}$
  formulas~$(\psi_i)_{i \leq t}$, each on~$n_i \leq n$ variables and~$m_i =
  O((\frac{r}{\eps})^{3r} n)$ clauses. For each~$\psi_i$ in turn, we apply the
  reduction from Lemma~\ref{lemma:no-subquadratic}. Notice that the resulting
  graph~$G$ has~$C \cup \{v_a,v_b\}$ as a vertex cover and thus
  $\vc(G) \leq |C|+2 = O((\frac{r}{\eps})^{3r} n)$.

  Assume towards a contradiction that we can solve \Problem{Closed
  $2$-Neighbourhood Sizes} in time $O(2^{o(\vc(G))} \, N^\delta)$.
  By inspecting the output of this hypothetical
  algorithm, we can determine again in time~$O(N \log N)$ whether~$\psi_i$ is satisfiable. The total running time of this
  algorithm would therefore be
  \[
      O(2^{o(\vc(G))} \, N^\delta + N\log N)
    = O(2^{o((\frac{r}{\eps})^{3r} n)} 2^{\delta n/2} + 2^{n/2} n)
    = 2^{(o((\frac{r}{\eps})^{3r}) + \delta/2 )n} O(n).
  \]
  Thus deciding whether the original formula~$\phi$ is decidable would
  be possible in total time
  \[
    2^{\eps n} n^{O(1)} + 2^{\eps n} \cdot 2^{(o((\frac{r}{\eps})^{3r}) + \delta/2 )n} O(n)
    = 2^{(o((\frac{r}{\eps})^{3r}) + \delta/2 + \eps)n} n^{O(1)}.
  \]
  For appropriate choices of~$\eps$, we can ensure that asymptotically
  $o((\frac{r}{\eps})^{3r}) + \delta/2 + \eps < 1$. But then the resulting
  algorithm contradicts the SETH and we conclude that the statement of 
  the theorem holds.

  Finally, let us note that
  the vertex cover does not need to be provided as input for the lower bound
  to hold since we can find it in time $O(1.2738^{\vc(G)} + mN)$~\cite{VertexCover}
  which is contained in~$O(2^{0.35\vc(G)}N)$.
\end{proof}

\noindent
The above construction implies several other algorithmic results, following from
the fact that~$\wcolnum_2(G) \leq \td(G) \leq \vc(G) \leq m+2$, $\tw(G) \leq \pw(G) \leq \td(G)-1 \leq m+1$
and~$\topgrad_1(G) \leq \grad_1(G) \leq m+2$ (\cf Observation~\ref{obs:nabla-vc})
and~$\dir \Delta_2(G) \leq \vc(G)$ (Observation~\ref{obs:dtf-vc}).
\begin{corollary}\label{cor:variouspar}
	Unless either the SETH fails, \Problem{Closed $2$-Neighbourhood Sizes} cannot be
	solved in time~$O(2^{o(f(G))} \, n^\delta)$ for any~$\delta < 2$ and any 
	structural parameter
	$f \in \{ \wcolnum_2, \dir \Delta_2 \vc, \pw, \tw, \td, \topgrad_1, \grad_1\}$.
\end{corollary}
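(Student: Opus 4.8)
The plan is to obtain this as a direct consequence of Theorem~\ref{lemma:vc-lower-bound}, using that each parameter in the list is dominated by the vertex cover number on \emph{every} graph. First I would record the relevant domination relations collected just above the corollary: the chains~$\wcolnum_2(G) \le \td(G) \le \vc(G)$ and~$\tw(G) \le \pw(G) \le \td(G) - 1$ are standard, $\topgrad_1(G) \le \grad_1(G) \le \vc(G)$ is Observation~\ref{obs:nabla-vc}, and~$\dir \Delta_2(G) \le \vc(G)$ (for a suitably chosen augmentation) is Observation~\ref{obs:dtf-vc}. Hence in every case~$f(G) \le \vc(G) + O(1)$, i.e.~$f(G) = O(\vc(G))$.

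Next I would argue that, for any~$f$ in the list, an algorithm~$\mathcal A$ running in time~$O(2^{o(f(G))}\,n^\delta)$ also runs in time~$O(2^{o(\vc(G))}\,n^\delta)$, so that a hypothetical~$\mathcal A$ would contradict Theorem~\ref{lemma:vc-lower-bound} (and thus SETH) for~$\delta < 2$. Writing the exponent of the running time as~$h(f(G))$ with~$h(k) = o(k)$, I would pass to the monotone envelope~$\tilde h(k) := \max_{j \le k} h(j)$, which is nondecreasing and still satisfies~$\tilde h(k) = o(k)$. Since~$f(G) \le \vc(G) + O(1)$, monotonicity gives~$h(f(G)) \le \tilde h(f(G)) \le \tilde h(\vc(G) + O(1)) = o(\vc(G))$, so the running time of~$\mathcal A$ is bounded by~$O(2^{o(\vc(G))}\,n^\delta)$ as claimed. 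As all the domination relations hold for every graph, this conclusion is simultaneous over the whole parameter list.

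As a cross-check I would note an equivalent route that avoids the envelope entirely: re-run the reduction in the proof of Theorem~\ref{lemma:vc-lower-bound} verbatim, observing that the only property of~$\vc$ used there is the upper bound~$\vc(G) \le |C| + 2$ on the constructed instances; since each~$f$ obeys the same bound~$f(G) = O(|C|) = O(n)$, the identical running-time calculation yields a sub-$2^{n}$ algorithm for \SAT. The only point requiring genuine care is the passage from~$o(f(G))$ to~$o(\vc(G))$ in the first route---the monotonization step---and this is the sole (minor) obstacle, everything else being a mechanical substitution of~$f$ for~$\vc$.
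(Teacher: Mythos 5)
Your proof is correct and follows essentially the same route as the paper, which leaves the corollary's argument implicit in the remark preceding it: each listed parameter is dominated by $\vc(G)$ up to an additive constant (via the stated chains of inequalities and Observations~\ref{obs:nabla-vc} and~\ref{obs:dtf-vc}), so the bound of Theorem~\ref{lemma:vc-lower-bound} transfers, equivalently because all these parameters are $O(|C|)=O(n)$ on the constructed instances. Your monotonization of the $o(\cdot)$ exponent is a small technical point the paper glosses over, but it is the right way to make the black-box transfer precise.
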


\section{Upper Bounds}

The main results of this section are that, for every graph~$G$ with $n$
vertices,
\Problem{Closed $2$-Neighbourhood Sizes} can be solved in~$O(2^{\vc(G)/2}
\vc(G)^2 \cdot n)$ time (see Theorem~\ref{theorem:vc-upper-bound}) and
in~$O(2^w w \cdot n)$ time (see Theorem~\ref{theorem:tw-upper-bound}), where
$\vc(G)$ is the size of the minimum vertex cover and $w$ is the treewidth of
$G$. Before we proceed with the proof of Theorem~\ref{theorem:vc-upper-bound},
we first introduce now some needed infrastructure.

We will use calligraphic letters like~$\mathcal H, \mathcal Q$ for set
families and use~$\Delta(\mathcal H) := \max_{H \in \mathcal H} |H|$ to
denote the maximum cardinality of sets in~$\mathcal H$.
A \emph{weighted set family} over a universe set~$U$ is a tuple~$(\mathcal H, w)$
where~$\mathcal H$ is a family of sets over~$U$ and~$w \colon \mathcal H \to
\mathbb R^+$ assigns a positive rational weight to each member $H\in\mathcal H$.

\begin{definition}[Weighted set queries]
	Let~$(\mathcal H, w)$ be a weighted set family over the universe~$U$. We define the
	 following weighted queries for every~$S \subseteq U$:
	\[
			w_{\subseteq}(S) := \sum_{\substack{H \in \mathcal H \colon \\ S \subseteq H}} w(H), \quad
			w_{\supseteq}(S) := \sum_{\substack{H \in \mathcal H \colon \\ H \subseteq S}} w(H), \quad
			w_{\cap}(S) 		:= \sum_{\substack{H \in \mathcal H \colon \\ H \cap S \neq \emptyset}} w(H).
	\]
\end{definition}

\noindent
In other words, $w_{\subseteq}(S)$ (resp.~$w_{\supseteq}(S)$) returns total weight of sets in~$\mathcal H$
that are supersets (resp.~subsets) of~$S$ and~$w_{\cap}(S)$ the total weights of sets in~$\mathcal H$
that intersect~$S$.

We will assume in the following that functions with domain in~$2^U$ are
implemented as data structures which allow constant-time lookup and
modification. 
This can be done either in a randomized way via hash-maps 
or using the following deterministic implementation on a RAM: assuming that~$U =[n]$ for
some natural number~$n$, we store the value to a key $S =
\{s_1,s_2,\dots,s_p\}$ at address $s_1 + n\cdot s_2 + \dots + n^p \cdot s_p$.
The largest address used in this manner has size polynomial in~$n$, however,
we only need to initialize as many registers as we store values (which will
be linear in the following applications).

\begin{lemma}\label{lemma:queries}
	Given a weighted set family~$(\mathcal H, w)$ and a set family~$\mathcal Q$
	over~$U$ one can compute all values~$w_{\subseteq}(Q)$, $w_{\cap}(Q)$, $w_{\supseteq}(Q)$ for~$Q \in \mathcal Q$
	in time 
	\[
		O\big(2^{\Delta(\mathcal Q)} \Delta(\mathcal Q) |\mathcal Q| + 2^{\Delta(\mathcal H)} |\mathcal H| \big).
	\]
\end{lemma}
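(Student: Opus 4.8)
The plan is to reduce all three queries to a constant number of lookups and subset-enumerations over dictionaries indexed by subsets of~$U$, exploiting the $O(1)$-time lookup/modification data structure discussed just above the statement. The engine of the whole argument is a single dictionary~$D$ that tabulates the superset query everywhere at once. Iterating over each~$H \in \mathcal H$ and enumerating all~$2^{|H|}$ subsets~$S \subseteq H$ by Gray-code enumeration — so that successive subsets, and their keys, differ in one element and can be produced in $O(1)$ amortised time — I add~$w(H)$ to~$D[S]$. After this pass we have, for \emph{every}~$S \subseteq U$, that~$D[S] = \sum_{H \supseteq S} w(H) = w_\subseteq(S)$, with absent keys read as~$0$. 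The cost is~$\sum_{H \in \mathcal H} 2^{|H|} = O(2^{\Delta(\mathcal H)} |\mathcal H|)$, which is exactly the second term of the bound, and it already gives~$w_\subseteq(Q)$ for each~$Q \in \mathcal Q$ as the single lookup~$D[Q]$.

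For the subset query~$w_\supseteq$ I would build a second dictionary~$T$ with~$T[S] = \sum_{H \in \mathcal H \colon H = S} w(H)$, the aggregated weight of members of~$\mathcal H$ equal to~$S$, using~$O(|\mathcal H|)$ insertions. Since~$w_\supseteq(Q) = \sum_{S \subseteq Q} T[S]$, I enumerate all subsets of each~$Q$ and sum the corresponding entries of~$T$, at a cost of~$O(2^{|Q|} \Delta(\mathcal Q))$ per query and hence~$O(2^{\Delta(\mathcal Q)} \Delta(\mathcal Q) |\mathcal Q|)$ overall, the first term of the bound.

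The intersection query~$w_\cap$ is the crux. The naive reformulation~$w_\cap(Q) = W - \sum_{H \cap Q = \emptyset} w(H)$ (with~$W$ the total weight) counts weight of sets contained in the complement~$U \setminus Q$, which may be vastly larger than~$\Delta(\mathcal Q)$ and so cannot be enumerated within budget. Instead I would apply inclusion--exclusion over the elements of~$Q$ met by a set~$H$: with~$T = H \cap Q$ one has the elementary identity~$\sum_{\emptyset \neq S \subseteq T}(-1)^{|S|+1} = [\,T \neq \emptyset\,]$, and summing the weighted contributions of all~$H$ and exchanging the order of summation yields
\[
  w_\cap(Q) = \sum_{\emptyset \neq S \subseteq Q} (-1)^{|S|+1} \, w_\subseteq(S)
            = \sum_{\emptyset \neq S \subseteq Q} (-1)^{|S|+1} \, D[S].
\]
Crucially this reuses the table~$D$ built in the first step, so evaluating it for a fixed~$Q$ amounts to enumerating the nonempty subsets of~$Q$, reading~$D$, and summing with alternating signs, again at cost~$O(2^{|Q|}\Delta(\mathcal Q))$.

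Summing the three contributions gives the claimed bound~$O(2^{\Delta(\mathcal Q)} \Delta(\mathcal Q) |\mathcal Q| + 2^{\Delta(\mathcal H)} |\mathcal H|)$. The only genuinely delicate point is the intersection query: the complement-based identity is useless, and the insight that carries the proof is that ``meeting~$Q$'' can be expressed, via inclusion--exclusion, purely through superset-weights~$w_\subseteq(S)$ with~$S \subseteq Q$ — precisely the quantities the first dictionary~$D$ already records, so that~$w_\cap$ costs nothing beyond the tabulation performed for~$w_\subseteq$.
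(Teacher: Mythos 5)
Your proof is correct and follows essentially the same route as the paper: tabulate $w_{\subseteq}$ by enumerating the subsets of each $H \in \mathcal H$, answer $w_{\supseteq}$ by enumerating the subsets of each $Q$, and reduce $w_{\cap}(Q)$ to $w_{\subseteq}$-values on subsets of $Q$ via inclusion--exclusion. The only (cosmetic) difference is that you collapse the paper's two-step computation --- a \Mobius inversion producing the exact-intersection weights $w_Q(S)$, followed by summation over the nonempty $S \subseteq Q$ --- into the single alternating sum $\sum_{\emptyset \neq S \subseteq Q}(-1)^{|S|+1}w_{\subseteq}(S)$, which is algebraically the same quantity and fits in the same time budget.
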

\begin{proof}
	First, we can easily compute all values for~$w_{\subseteq}(S)$ for~$S \subseteq U$
	as follows: for every~$H \in \mathcal H$, we increment a counter for each subset
	$S \subseteq H$. The resulting data structure gives exactly~$w_{\subseteq}$ and it
	takes~$O(\sum_{H \in \mathcal H} 2^{|H|}) = O(2^{\Delta(\mathcal H)} |\mathcal H|)$ arithmetic
	operations on values of~$w$ to compute it.
	Similarly, we compute all values for~$w_{\supseteq}(Q)$ for~$Q \in \mathcal Q$ 
	in time~$O(2^{\Delta(\mathcal Q)} |\mathcal Q|)$ by simply following the definition.

	For~$Q \in \mathcal Q$ and~$S \subseteq Q$ we define the auxiliary
	weighted query
	\[
			w_Q(S) := \sum_{\substack{H \in \mathcal H \colon \\ H \cap Q = S}} w(H),
	\]
	that is, $w_Q(S)$ returns the total weight of sets in~$H \in \mathcal H$ whose
	intersection with~$Q$ is precisely~$S$.
	We can compute $w_Q$ from~$w_{\subseteq}$ via the following inclusion-exclusion formula:
	\[
		w_Q(S) := \sum_{S \subseteq S' \subseteq Q} (-1)^{|S'\setminus S|} w_{\subseteq}(S').
	\]
	For a fixed~$Q \in \mathcal Q$, all values~$w_Q(S)$ for~$S \subseteq Q$ can be computed in
	time~$O(2^{|Q|} |Q|)$ using the fast \Mobius transform~\cite{YatesAlgorithm}. Given~$w_Q$,
	we can then compute~$w_{\cap}(Q)$ using the identity
	\[
		w_{\cap}(Q) = \sum_{S \subseteq Q} w_Q(S),
	\]
	hence all values of~$w_{\cap}(Q)$, $Q \in \mathcal Q$ can be computed in time
	\[
		O(2^{\Delta(\mathcal Q)}\Delta(\mathcal Q) |\mathcal Q| + \sum_{Q \in \mathcal Q} 2^{|Q|}) = O(2^{\Delta(\mathcal Q)} |\mathcal Q| ).
	\]
	Summing up the time needed to compute~$w_{\subseteq}$, $w_{\supseteq}$ and~$w_{\cap}$ yields the claimed bound.
\end{proof}

\noindent
As observed above, we can find a vertex cover of size~$t$ in time
$O^*(1.274^{t})$~\cite{VertexCover}, and thus the following result holds even
if the vertex cover~$X$ is not provided as input. We will use the Iverson
bracket notation~$\Iver{\phi}$ in the proof of Theorem \ref{theorem:vc-upper-
bound}, which evaluates to~1 if~$\phi$ is a true statement and to~0 otherwise.

\begin{theorem}\label{theorem:vc-upper-bound}
	For every graph~$G$ with $n$ vertices, 
	\Problem{Closed $2$-Neighbourhood Sizes} can be solved in~$O(2^{\vc(G)/2}
	\vc(G)^2 \cdot n)$ time.
\end{theorem}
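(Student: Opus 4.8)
The plan is to fix a minimum vertex cover $X$ of $G$ (computable in time $O^*(1.274^{\vc(G)})$ and hence within the claimed budget, so it need not be given), write $t := \vc(G) = |X|$ and $Y := V(G)\setminus X$, and exploit that $Y$ is independent. The decisive structural fact is that for $y \in Y$ every vertex of $N^2[y]$ is reached through $X$: any path $y\,z\,u$ of length two has its midpoint $z\in N(y)\subseteq X$. Consequently $N^2[y]$ splits into $\{y\}$, an \emph{$X$-part} $\{x\in X : x\in N(y)\text{ or }x\text{ is adjacent to }N(y)\}$, and a \emph{$Y$-part} $\{y'\in Y\setminus\{y\} : N(y')\cap N(y)\neq\emptyset\}$. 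I would record that $|N^2[y]|$ depends on $y$ only through its \emph{profile} $N(y)\subseteq X$, and that the number of distinct profiles is at most $\min(2^{t},n)$ --- in particular at most $n$ --- which is what ultimately tames the running time.

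The $X$-part is cheap: for a fixed profile $S=N(y)$ the set $S\cup N_X(S)$ (neighbours of $S$ inside $X$) is determined by the adjacency of $G$ restricted to $X$, so all $X$-parts can be evaluated in $O(t^2)$ per distinct profile, i.e. $O(t^2 n)$ overall. The heart of the matter is the $Y$-part, which equals $|Y|-1-D(y)$ with the \emph{disjointness count}
\[
  D(y) \ :=\ \#\{\,y'\in Y : N(y')\cap N(y)=\emptyset\,\}\ =\ \#\{\,y'\in Y : N(y')\subseteq \comp{N(y)}\,\},
\]
the complement being taken inside $X$. Thus $D(y)=g(\comp{N(y)})$ where $g(S):=\#\{y'\in Y: N(y')\subseteq S\}$ is the subset-sum (zeta) transform over the Boolean lattice of $X$, and everything reduces to evaluating $g$ at the (at most $n$) query sets $\comp{N(y)}$ inside a $2^{t/2}$ budget.

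The hard part --- and the place where the factor $2^{\vc(G)/2}$ is earned --- is doing this without paying the $2^{t}$ of a direct transform over $2^X$. I would split $X = X_1 \uplus X_2$ with $|X_1|,|X_2|\le\lceil t/2\rceil$ and write $N_i(y):=N(y)\cap X_i$, so that $N(y')\cap N(y)=\emptyset$ iff $N_1(y')\cap N_1(y)=\emptyset$ \emph{and} $N_2(y')\cap N_2(y)=\emptyset$. Group the vertices of $Y$ by their \emph{right profile} $N_2(\cdot)\subseteq X_2$; the number $R$ of distinct right profiles satisfies $R\le\min(2^{t/2},n)$. For each right-profile class $j$ (with profile $B_j$) I would run a fast \Mobius transform~\cite{YatesAlgorithm} over the subset lattice of $X_1$ (of size $2^{|X_1|}\le 2^{t/2}$) to obtain the partial counts $w_j(L'):=\#\{y'\text{ in class }j : N_1(y')\subseteq L'\}$ for all $L'\subseteq X_1$; this is exactly a $w_{\supseteq}$-type computation over $X_1$ (\cf Lemma~\ref{lemma:queries}), now with maximum set size at most $t/2$. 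Then
\[
  D(y)\ =\ \sum_{j\,:\, B_j\cap N_2(y)=\emptyset} w_j\big(X_1\setminus N_1(y)\big).
\]
The crucial accounting is that building all $w_j$ costs $O\big(R\cdot 2^{t/2} t\big)\le O\big(2^{t/2}tn\big)$ because $R\le n$ (not $2^{t/2}$), and that the combining step, carried out by streaming each class $j$ into accumulators $\mathrm{acc}[y]$ over the queries $y$ with $B_j\cap N_2(y)=\emptyset$, costs $O(nR)\le O(2^{t/2}n)$ in total while using only $O(2^{t/2})$ working space. The whole disjointness computation therefore runs in $O(2^{t/2} t n)$, comfortably inside $O(2^{t/2}\,t^2 n)$.

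It remains to handle the vertices $v\in X$ and to assemble the answer. For $v\in X$ I would again separate $N^2[v]\cap X$ from $N^2[v]\cap Y$: the former is governed by distance-$\le 2$ relations inside $X$, where a common neighbour may lie in $X$ or in $Y$, so I precompute for every pair in $X$ whether they share a neighbour by scanning each $y\in Y$ and marking the pairs inside $N(y)$ (total $O(\sum_y |N(y)|^2)\le O(t^2 n)$); the latter, namely $\#\{y\in Y : N(y)\cap N(v)\neq\emptyset\}$ together with the direct neighbours $N(v)\cap Y$, is read off from the same function $g$. Summing the three contributions for each vertex (with the $-1$ correction for $y$ itself whenever $N(y)\neq\emptyset$, which holds as $G$ is connected) and propagating each profile's value to all vertices sharing it yields all $|N^2[v]|$ in the stated time. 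I expect the main obstacle to be precisely the organisation of the meet-in-the-middle so that the naive $2^{t/2}\times 2^{t/2}=2^{t}$ intermediate table never materialises --- the resolution being to index the outer loop by the at-most-$n$ distinct right profiles rather than by all $2^{t/2}$ subsets of $X_2$ --- together with the routine but error-prone bookkeeping needed to avoid double-counting between the $X$-part, the $Y$-part, and the query vertex itself.
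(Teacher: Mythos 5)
Your proof is correct, but it reaches the $2^{\vc(G)/2}$ bound by a genuinely different mechanism than the paper. The paper never splits the vertex cover $X$; instead it splits the independent set $I=V(G)\setminus X$ by \emph{degree} at the threshold $t/2$: vertices of degree at most $t/2$ have profiles $N(v)$ of size at most $t/2$, vertices of degree greater than $t/2$ have \emph{complement} profiles $X\setminus N(u)$ of size less than $t/2$, and --- by pigeonhole in the $t$-element universe $X$ --- any two high-degree vertices automatically share a neighbour, so their mutual contribution is counted for free. All remaining interactions are then resolved by three invocations of a subset/superset/intersection query lemma (Lemma~\ref{lemma:queries}) over set families whose maximum set size is at most $t/2$, which is where the factor $2^{t/2}$ arises. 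Your route instead keeps all of $Y$ together, reduces everything to a single disjointness-counting primitive $g(S)=\#\{y':N(y')\subseteq S\}$, and evaluates it by meet-in-the-middle over a balanced split $X=X_1\uplus X_2$, with the observation that the number $R$ of distinct right profiles is at most $\min(2^{t/2},n)$ preventing the $2^{t/2}\times 2^{t/2}$ blow-up; the streaming of one class at a time keeps the space at $O(2^{t/2}+n)$, and your cost accounting ($O(R\cdot 2^{t/2}t)$ for the transforms, $O(Rnt)$ for combining) is sound and lands within $O(2^{t/2}t^2 n)$. What each approach buys: the paper's degree dichotomy is shorter once the query lemma is available and the pigeonhole step eliminates the dense-dense case entirely, whereas your construction is more uniform (one primitive serves the $Y$--$Y$, $X$--$Y$ and $Y$--$X$ counts alike, with no case analysis on degrees) and evaluates $g$ at arbitrary query sets rather than only at small sets or complements of small sets. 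Your treatment of the $X$-side counts and of the $-1$ self-correction matches the paper's in spirit and is handled at the same level of detail; the only point worth spelling out in a final write-up is that common neighbours of two cover vertices may lie in $X$ as well as in $Y$, which you acknowledge and which costs only an additional $O(t^3)$ scan.
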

\begin{proof}
	Let~$X$ be a vertex cover of~$G$ containing~$t$ vertices (in particular
	$t = \vc(G)$ if~$X$ is a minimum vertex cover).
	Let~$I := V(G)\setminus X$ be the complement independent set to~$X$.

	First, we compute the second neighbourhood size for vertices in
	the vertex cover~$X$ in time~$O(t n)$. To that end, let~$\tilde E$ contain
	all pairs~$u,v$ with~$u \in X$ and~$v \in X \cup I$ such that~$u,v$ have 
	exaclty distance two to each other. Since~$|\tilde E| \leq t^2 + t n = O(t n)$ we can
	easily compute $|\tilde E|$ in the claimed time. Consequently, we can also compute in $O(t n)$ time the
	size of the second neighbourhood for every vertex in~$X$.

	Let us now partition the independent set~$I$ into sets~$I_l, I_h$
	where~$I_l$ contains all vertices from~$I$ that have degree at most~$t/2$
	and~$I_h$ the remaining vertices. Note that every pair of vertices~$u,v \in I_h$
	will share at least one common neighbour in~$X$, hence all vertices in~$I_h$ have
	exactly distance two to each other---we can therefore count the contribution of
	$I_h$ to the  closed second neighbourhood of each member in~$I_h$ as~$|I_h|$.	
	It is therefore left to compute the contributions of vertices in~$I_l$ to
	vertices in~$I_l$, of vertices in~$I_h$ to vertices in~$I_l$, and of
	vertices in~$I_l$ to vertices in~$I_h$.

	To that end, let~$(\mathcal X_l, w^l)$ denote the weighted set family
	over~$X$ with~$\mathcal X_l := \{ N(v) \}_{v \in I_l}$ and where~$w^l(H)$
	simply counts the number of vertices in~$I_l$ that have neighbourhood
	exactly~$H$. Then for~$v \in I_l$ the value~$w^l_\cap(N(v))$ provides
	us exactly with the number of vertices in~$I_l$ whose neighbourhood intersects with~$N(v)$
	(including~$v$ itself), thus by Lemma~\ref{lemma:queries} we can compute the
	contribution of~$I_l$ to vertices in~$I_l$ in time 
	\[
		O(2^{\Delta(\mathcal X_l)} \Delta(\mathcal X_l) |\mathcal X_l|) = O(2^{t/2} t n).
	\]
	Next, to compute how vertices in~$I_l$ contribute to the second
	neighbourhood of vertices in~$I_h$, let~$\bar N(u) := X\setminus N(u)$ for every $u\in I_h$. We define the set family~$\bar{\mathcal X_h} := \{ \bar N(u) \}_{u \in I_h}$.	A vertex~$v \in I_l$ does
	\emph{not} contribute to the second neighbourhood of a vertex~$u \in I_h$
	if~$N(v) \subseteq \bar N(u)$. Of course, if we know the number of 
	vertices in~$I_l$ that do \emph{not} contribute to the second neighbourhood of~$u \in I_h$, we can easily compute the 
	number of vertices in~$I_l$ that contribute. For a given vertex~$u \in I_h$, we therefore want to compute
	\begin{align*}
		\sum_{v \in I_l} \Iver{N(v) \subseteq \bar N(u)}
		&= \sum_{H \in \mathcal X_l} \Iver{H \subseteq \bar N(u)} \cdot w^l(H) \\
		&= \,\, \sum_{\mathclap{\substack{H \in \mathcal X_l \colon \\ H \subseteq \bar N(u)}}} \, w^l(H) 
		= w^l_{\supseteq}(\bar N(u)),
	\end{align*}
	which, again, by Lemma~\ref{lemma:queries} can be computed for all sets in
	$\bar{\mathcal X_h}$ in time
	\[
		O(2^{\Delta(\mathcal X_l)} \Delta(\mathcal X_l) |\mathcal X_l| + 2^{\Delta(\bar{\mathcal X_h})} |\bar{\mathcal X_h}|)
		= O(2^{t/2} t n).
	\]
	Finally, let us compute how vertices in~$I_h$ contribute to the second
	neighbourhood of vertices in~$I_l$. Let~$(\bar{\mathcal X_h}, w^h)$ denote
	the weighted set family over~$X$ with~$\bar{\mathcal X_h}$ defined as above, where~$w^h(H)$  counts the number of vertices in~$I_h$ that have
	neighbourhood exactly~$X\setminus H$. To compute the contribution of~$I_h$
	to the second neighbourhood of a vertex~$v \in I_l$, we instead count the
	number of vertices in~$I_h$ that do \emph{not} contribute. Since~$N(u) \cap
	N(v) = \emptyset$ exactly when~$N(v) \subseteq X \setminus N(u)$, this
	quantity is given by
	\begin{align*}
		\sum_{H \in \bar{\mathcal X_h}} \Iver{N(v) \subseteq H} \cdot w^h(H)
		= \,\, \sum_{\mathclap{\substack{H \in \bar{\mathcal X_h} \colon \\ N(v) \subseteq H }}} \, w^h(H)
		= w^h_{\subseteq}(N(v)).
	\end{align*}
	We can compute all necessary values~$w^h_{\subseteq}(S)$ for~$S \in \mathcal X_l$ using Lemma~\ref{lemma:queries}
	in time
	\[
		O(2^{\Delta(\bar{\mathcal X_h})} \Delta(\bar{\mathcal X_h}) |\bar{\mathcal X_h}| + 2^{\Delta(\mathcal X_l)} |\mathcal X_l|)
		= O(2^{t/2} t n).
	\]
	Having computed all second neighbourhoods of vertices in~$X$ and summed up
	all possible ways in which vertices in~$I_h$ and~$I_l$ can contribute to
	each other's second neighbourhood in the claimed running time, we conclude the statement of the theorem. 
\end{proof}

In the next theorem we prove the second result of this section, namely an
algorithm for \Problem{Closed $2$-Neighbourhood Sizes} with exponential
dependency on the treewidth of the input graph.

\begin{theorem}\label{theorem:tw-upper-bound}
  For every graph~$G$ with $n$ vertices and with a tree decomposition of
  width~$w$ given as input,
\Problem{Closed $2$-Neighbourhood Sizes} can be solved in~$O(2^w w \cdot n)$ time.
\end{theorem}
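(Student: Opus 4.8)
The plan is to run a dynamic programme over a nice tree decomposition of~$G$ that spends $O(2^w)$ amortised time per node, producing all values~$|N^2[v]|$ in total time~$O(2^w w \cdot n)$. The starting point is the characterisation that, for any two vertices~$u,v$, we have~$u \in N^2[v]$ if and only if~$N[u] \cap N[v] \neq \emptyset$: a common element~$z \in N[u]\cap N[v]$ is a walk of length at most two between~$u$ and~$v$, and conversely. Hence it suffices, for every~$v$, to count the vertices~$u$ whose closed neighbourhood meets~$N[v]$. I split this count into a \emph{near} part, coming from vertices~$u$ that share a bag with~$v$, and a \emph{far} part. Writing~$T_v$ for the subtree of nodes whose bag contains~$v$, there are only~$O(w)$ near candidates per node of~$T_v$, hence~$O(wn)$ in total, and these I handle directly (charging each to the top node of~$T_u \cap T_v$ to avoid double counting). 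For a far vertex~$u$, any bag~$\beta(x)$ separating~$u$ from~$v$ in~$\mathcal T$ is a separator of~$G$; since no length-one walk can cross it, $u \in N^2[v]$ forces a common neighbour~$z \in \beta(x)$. Thus, across a separating bag, the relation is decided entirely by the two traces~$N(u) \cap \beta(x)$ and~$N(v) \cap \beta(x)$: they must intersect.

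For the dynamic programme I root~$\mathcal T$ and maintain, at each node~$x$, the weighted set family over the universe~$\beta(x)$ whose members are the traces~$N(u)\cap\beta(x)$ of the vertices~$u$ introduced in~$\mathcal T_x$, the weight recording how many such~$u$ realise each trace. These families are updated along the tree in the leaf/introduce/forget/join fashion; because every trace is a subset of a size-$(w{+}1)$ bag, and introduce and forget nodes modify the universe by a single element, the associated subset-sum (zeta/\Mobius) transforms can be kept current with amortised~$O(2^w)$ work per node. By Lemma~\ref{lemma:queries}, the transforms let me read off the~$w_\cap$-value of any query set~$S \subseteq \beta(x)$, i.e.\ the total weight of traces meeting~$S$. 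A second, top-down sweep produces the complementary families describing the vertices lying \emph{outside}~$\mathcal T_x$, so that at each bag both sides of the separator are available.

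To assemble the per-vertex answers while counting each far partner exactly once, I charge it to a canonical bag. Removing the subtree~$T_v$ from~$\mathcal T$ leaves several components, and every far vertex~$u$ lies in exactly one of them, which attaches to~$T_v$ at a unique bag~$x \in T_v$; there~$v \in \beta(x)$, so the required common neighbour lies in~$\beta(x)\cap N(v)$, and the number of far partners attaching at~$x$ equals the~$w_\cap$-value of the query set~$N(v)\cap\beta(x)$ evaluated on the family of the component(s) hanging off~$x$ (obtained by a subtree-difference of the child families, together with the top-down family for the component above~$T_v$). Crucially this increment depends only on the trace~$N(v)\cap\beta(x)$, so accumulating it into~$v$'s counter as the sweep passes through~$x$ is an~$O(1)$ operation; summed over~$x \in T_v$ and over all~$v$ this costs~$\sum_v |T_v| = O(wn)$. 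Together with the amortised~$O(2^w)$ transform maintenance over the~$O(wn)$ nodes, this gives the claimed~$O(2^w w\cdot n)$ bound.

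The routine parts are the transform-based queries, already supplied by Lemma~\ref{lemma:queries}, and the distance-$\le 1$ and near-vertex bookkeeping. I expect the main obstacle to be the correctness of the counting: I must prove that the canonical attachment rule assigns every ordered far pair~$(v,u)$ to exactly one bag, that the~$w_\cap$ query then counts each such~$u$ once even when~$u$ and~$v$ share several common neighbours or are separated by several bags, and that the far and near cases together partition~$N^2[v]\setminus\{v\}$ with no overlap. Pinning down the subtree-difference update at join nodes so that the amortised per-node cost stays~$O(2^w)$ rather than~$O(2^w w)$, and discharging this case analysis cleanly, is where the real work lies.
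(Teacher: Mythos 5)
Your proposal is correct in substance and, once unwound, it computes exactly the same core tables as the paper's proof: for a bag $X_i$ with past $P_i$ and future $F_i$, the paper's dictionaries $N^P_i[Y]=|N_G(Y)\cap P_i|$ and $N^F_i[Y]=|N_G(Y)\cap F_i|$ are precisely the values $w_{\cap}(Y)$ of your weighted families of traces $\{N(u)\cap X_i\}$ over the past and future vertices, tabulated for all $Y\subseteq X_i$. The difference lies in how the tables are built and how the per-vertex answers are assembled. The paper builds them by a direct bottom-up/top-down tree DP (no \Mobius transforms), and then runs a second per-vertex DP that accumulates $|N^2(u)\cap P_i|$ as $u$ travels up its subtree, reading off the answer at $u$'s topmost bag via the future table; this sidesteps both your near/far case analysis and the canonical-attachment-bag accounting. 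Your route works, but two steps you label routine need more care than you give them. First, deciding distance at most $2$ for a \emph{near} pair $u,v$ is not a local check in a shared bag: the witness $z\in N[u]\cap N[v]$ can lie anywhere, and you need the Helly property of subtrees ($T_u$, $T_v$, $T_z$ pairwise intersect, hence share a node) to conclude that some bag of $T_u\cap T_v$ contains $u$, $v$ and $z$ simultaneously, after which a per-pair flag propagated along $T_u\cap T_v$ finishes the job; also the near-pair work is really $O(w^3 n)$ node-incidences rather than $O(wn)$, still dominated by $2^w w\cdot n$. Second, the join-node ``subtree-difference'' you worry about never arises in a nice decomposition: since join bags satisfy $\beta(x)=\beta(y)=\beta(z)$, a component of $\mathcal T - T_v$ can only attach at an introduce node of $v$ (one child, whose entire past is that component) or above the root of $T_v$, so each attachment uses one ready-made child family or the single top-down family, and the amortised $O(2^w)$ maintenance (insert a trace when a vertex is forgotten, pointwise addition at joins) goes through. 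Finally, note that Lemma~\ref{lemma:queries} cannot be invoked from scratch at every node---the $2^{\Delta(\mathcal H)}|\mathcal H|$ term would blow up---so the incremental maintenance of the subset-sum transform that you sketch is essential rather than optional; with it in place your argument yields the claimed bound.
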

\begin{proof}
  We can assume without loss of generality that the provided tree decomposition~$(X_i)_{i \in T}$ is nice (see Definition~\ref{nice-tree-decom-def}).
  For a bag~$X_i$ in the decomposition, we define the \emph{past} as~$P_i := \big( \bigcup_{j \in
  T_i} X_j \big) \setminus X_i$, and the \emph{future} as~$F_i := V(G)
  \setminus (P_i \cup X_i)$.

  We now pass over the decomposition in a bottom-up manner to compute a collection
  of dictionaries $N^P_i$, $i \in T$ with the following semantic: for every subset
  $Y \subseteq X_i$ we have that~$N^P_i[Y] := |N_G(Y) \cap P_i|$. Using backtracking,
  we afterwards compute the dictionary~$N^F_i$ with~$N^F_i[Y] := |N_G(Y) \cap F_i|$.
  Note that for a join-bag~$X_h$ with children~$X_i$, $X_j$ we have that
  \[
    N^P_h[Y] = N^P_i[Y] + N^P_j[Y], \quad Y \subseteq X_h,
  \]
  which follows easily from the assumption that the tree decomposition~$(X_i)_{i \in T}$ is nice.

  In our next pass over the decomposition, we keep track of the
  quantities~$|N(u) \cap P_i|$ and~$|N^2(u) \cap P_i|$ as well as the
  sets~$N(u) \cap X_i$ and the number of 2-neighbours that~$u$ has in~$G[X_i
  \cup P_i]$ for every~$v \in X_i$. Maintaining appropriate dynamic
  programming tables is simple for introduce- and forget operations, and the
  join-case is again a simple addition of table entries.

  Consider a vertex~$u$ and let~$X_i$ be the highest bag in which~$u$ appears,
  i.e.~either $X_i$ is the root bag and contains vertex $u$, or the parent
  bag~$X_j$ of~$X_i$ satisfies~$X_j := X_i \setminus \{u\}$. From the dynamic
  programming table in~$X_i$ we know the size of~$N^2(u) \cap P_i$. Now it holds
  that
  \begin{align*}
    |N^2(u)| &= |N^2(u) \cap P_j| + |N^2(u) \cap X_j  + |N^2(u) \cap F_j| \\
             &= |N^2(u) \cap P_i| + |N(N(u) \cap X_i) \cap X_i| + N^F_j[N(u) \cap X_i]
  \end{align*}
  and we have all three quantities readily available. To compute the closed second
  neighbourhood, we add the degree of~$u$. After passing over the whole tree decomposition
  we therefore have the size of every closed second neighbourhood of every vertex.

  The first pass over the decomposition takes time~$O(2^w n)$, the second one maintains tables of size~$O(w^2)$ and computes the degree inside bags in time~$O(w^2)$,
  for a total running time of~$O(w^2 n)$. We conclude that the total time
  taken is~$O(2^w n)$, as claimed.
\end{proof}

%

\section*{Conclusion}

\noindent
We used the SETH reduction toolkit by Abboud, Williams, and Wang to show that
computing the 2-neighbourhood sizes is neither possible in subquadratic time
nor in fpt-time with subexponential dependence on a range of `sparseness parameters'.
In that sense, the algorithm by Demaine \etal cannot be improved substantially; 
although a better exponential dependence of course remains possible.
We supplemented these lower bounds with algorithms that solve the problem in
time~$O(2^{\vc(G)/2} \vc(G)^2 \cdot n)$ and $O(2^w w \cdot n)$.

\end{document}